\documentclass[lotsofwhite,charterfonts]{patmorin}
\usepackage{pdfsync}
\usepackage[sort&compress,numbers]{natbib}
\usepackage{setspace}
\usepackage{graphicx}
\usepackage{color}
\pagestyle{myheadings}
\usepackage{enumitem}
\listfiles
\pdfoutput=1

\definecolor{grey}{gray}{0.4}



\renewcommand{\lll}{\ensuremath{\protect{(\ell,L)}}}

\newcommand{\Oh}[1]{\ensuremath{\protect\mathcal{O}(#1)}}

\usepackage{amsthm}

\newcommand{\comment}[1]{}

\newcommand{\seclabel}[1]{\label{sec:#1}}

\newcommand{\secref}[1]{\mbox{Section~\ref{sec:#1}}}

\newcommand{\figlabel}[1]{\label{fig:#1}}

\newcommand{\figref}[1]{\mbox{Figure~\ref{fig:#1}}}

\newtheorem{thm}{Theorem}{\bfseries}{\itshape}
\newcommand{\thmlabel}[1]{\label{thm:#1}}
\newcommand{\thmref}[1]{Theorem~\ref{thm:#1}}

{\bfseries}{\itshape}

\newtheorem{lem}{Lemma}{\bfseries}{\itshape}
\newcommand{\lemlabel}[1]{\label{lem:#1}}
\newcommand{\lemref}[1]{Lemma~\ref{lem:#1}}

\newtheorem{cor}{Corollary}{\bfseries}{\itshape}
\newcommand{\corlabel}[1]{\label{cor:#1}}

{\bfseries}{\itshape}

{\bfseries}{\itshape}

{\bfseries}{\itshape}

\newcommand{\etal}{\emph{et al.}}

\newcommand{\Ze}{\ensuremath{\protect{\mathbb{Z}}}}

\newcommand{\sekshun}[1]                
	{\noindent {\em 
}                            
	\markboth{#1 \hfill}{#1 \hfill} 
	}

\title{Graph Layouts via Layered Separators}
\author{Vida Dujmovi\'c
  \footnote{School of Mathematics and Statistics \& 
Department of Systems and Computer Engineering,  
   Carleton University, Ottawa, Canada
    \texttt{\{vdujmovic@math.carleton.ca\}}. 
    Research supported in part by NSERC.}
}
\date{}

\begin{document}
\maketitle

\begin{abstract}
A  $k$-queue layout of a graph consists of a total order of the
vertices, and a partition of the edges into $k$ sets such that
no two edges that are in the same set are nested with respect to the vertex ordering. A $k$-track layout of a graph consists of a vertex $k$-colouring, and a total order of each
vertex colour class, such that between each pair of colour classes no
two edges cross.  The queue-number  (track-number) of a graph $G$, is
the minimum $k$ such that $G$ has a $k$-queue ($k$-track) layout.

This paper proves that every $n$-vertex planar graph has
track number and queue number at most 
 \Oh{\log n}. 
 This improves the
result of Di Battista, Frati and Pach [{\em Foundations of Computer Science},
(FOCS '10), pp. 365--374]  who proved the first sub-polynomial bounds on
the queue number and track number of planar
graphs. Specifically, they obtained \Oh{\log^2 n} queue
number and \Oh{\log^{8} n} track number bounds for planar
graphs. 

The result also implies that every planar graph has a 3D
crossing-free grid drawing in \Oh{n\log n} volume. The proof uses a non-standard 
 type of graph separators.
\end{abstract}

\section{Introduction}

A \emph{queue layout} of a graph consists of a total order of the
vertices, and a partition of the edges into sets (called {\em queues}) such that
no two edges that are in the same set are nested with respect to the vertex ordering. The minimum number of
queues in a queue layout of a graph is its \emph{queue-number}.  Queue layouts
have been introduced by Heath, Leighton, and Rosenberg \cite{HR-SJC92, HLR-SJDM92}
and have been extensively studied since \cite{EI71, Hasunuma-GD03, HLR-SJDM92,
  HR-SJC92, Pemmaraju-PhD, RM-COCOON95, SS00, Tarjan72a, DMW-SJC05, 
  DujWoo-DMTCS05, DPW-DMTCS04, DBLP:journals/dmtcs/Wood08,
  dfp-qnpg-focs-10, DBLP:journals/dmtcs/Wood05a}. They have  applications in parallel process scheduling, fault-tolerant
processing, matrix computations,  and sorting networks (see
\cite{Pemmaraju-PhD} for a survey). Queue layouts of directed acyclic
graphs  \cite{BCLR-JPDC96, HP-SJC99, HPT-SJC99, Pemmaraju-PhD} and
posets \cite{HP-SJDM97, Pemmaraju-PhD} have also been investigated. 

The dual concept of a queue layout is a \emph{stack layout}, introduced by
Ollmann~\cite{Ollmann73} and commonly called a \emph{book
  embedding}. It is defined similarly, except that no two edges in
the same set are allowed to cross with respect to the vertex
ordering. {\em Stack number} (also known as {\em book thickness}) is
known to be bounded for planar graphs \cite{Yannakakis89}, bounded
genus graphs \cite{Malitz94b} and, most generally, all proper minor
closed graph families \cite{Blankenship-PhD03, BO01}. 

No such bounds are known for the queue number of these graph families. The question of Heath~\etal~\cite{HR-SJC92,HLR-SJDM92} on whether every
planar graph has \Oh{1} queue-number, and the  more general question
(since planar graphs have stack-number at most four
\cite{Yannakakis89}) of whether queue-number is bounded by
stack-number remains open. Heath~\etal~\cite{HR-SJC92,HLR-SJDM92}
conjectured that both of these questions have an affirmative answer. Until recently, the best known bound for the queue number of planar graphs was \Oh{\sqrt{n}}. This upper bound follows easily from the fact that planar graphs have pathwidth at most \Oh{\sqrt{n}}. 
 In a recent breakthrough \cite{dfp-qnpg-focs-10}, this queue number
 bound for planar graphs was reduced to \Oh{\log^2 n}, by Di Battista,
 Frati and Pach\footnote{The original bound proved in this conference
   paper, \cite{dfp-qnpg-focs-10}, is \Oh{\log^4n}. The bounds stated
   here are from the journal version that is under the submission.} \cite{dfp-qnpg-focs-10}. The proof,  however, is quite
 involved and long. 

We improve the bound for the queue number of planar
graphs  to \Oh{\log n}.
%
%
Pemmaraju~\cite{Pemmaraju-PhD} 
conjectured that planar graphs have \Oh{\log n} queue-number. Thus the result answers this question in affirmative. He also conjecture that
this is the correct lower bound. To date, however, the best known lower bound
is a constant. 

The proof is simple and it uses a special kind of graph separators. In
particular, the main result states that every $n$-vertex graph that has such
a separator (and it turns out that planar graphs do) has an \Oh{\log
  n} queue number 
As such, the result may provide a
tool for breaking the \Oh{\sqrt{n}} queue number bound for other graph families,
such as graphs of bounded genus and other proper minor closed
families of graphs.

One of the motivations for studying queue layouts is their connection
with three-dimensional graph drawings in a grid of small volume. In particular, a
{\em 3D grid drawing} of a graph is a placement of the vertices at
distinct points in $\Ze^3$, such that the line-segments representing
the edges are pairwise non-crossing. A 3D grid drawing that fits in an
axis-aligned box with side lengths $X-1$, $Y-1$, and $Z-1$, is a  $X
\times Y \times Z$ drawing with volume $X \cdot Y\cdot Z$.

It has been established in \cite{DMW-SJC05, DPW-DMTCS04}, that an $n$-vertex graph $G$ has an $\Oh{1}\times \Oh{1}\times \Oh{n}$ drawing, if and only if $G$ has
\Oh{1} queue-number. Therefore, if a graph has a bounded queue number
then it has a linear volume 3D grid drawing. One of the most extensively
studied graph drawing questions is whether planar graphs have linear
volume 3D grid drawings -- the question is due to Felsner
\etal\cite{FLW-GD01-ref}. Our results imply \Oh{n\log n} bound, improving on the previous \Oh{n\log^{8} n} bound
\cite{dfp-qnpg-focs-10}.

In the next section, we give precise statement of our result and
introduce a tool used to obtain it. In \secref{proof} we prove the main
result and then conclude with some open problems in \secref{conclusion}.

\section{Results and Tools}\seclabel{results}


The main tool in proving our result is the following type of graph
separators. 

A \emph{layering} of a graph $G$ is a partition $V_0,V_1,\dots,V_p$ of
$V(G)$ such that for every edge $vw\in E(G)$, if $v\in V_i$ and $w\in
V_j$ then $|i-j|\leq 1$. Each set $V_i$ is called a \emph{layer}.  A
\emph{separation} of a graph $G$ is a pair $(G_1,G_2)$ of subgraphs of
$G$, such that $G=G_1\cup G_2$ and  there is no edge of
$G$ between $V(G_1)-V(G_2)$ and $V(G_2)-V(G_1)$. 

A graph $G$ has a \emph{layered} $\ell$-\emph{separator} if for some fixed
layering $L$ of  its vertices the following holds: For every subgraph $G'\subseteq G$ there is a separation $(G'_1,G'_2)$ of $G'$
such that each layer of $L$ contains at most $\ell$ vertices in
$V(G'_1)\cap    V(G'_2)$, and both $V(G'_1)-V(G'_2)$ and
$V(G'_2)-V(G'_1)$ contain at most $\frac{2}{3}|V(G')|$ vertices. Here the set  $V(G'_1)\cap V(G_2')$ is a 
(layered $\ell$--) \emph{separator} of $G'$.  Finally, if a graph $G$
has a \emph{layered} $\ell$-\emph{separator} for some fixed layering $L$, we say that $G$ has \lll--\emph{separator}. 
Note that these separators do not necessarily have small order, in
particular $V(G'_1)\cap V(G'_2)$ can have linear number of
vertices of $G'$.


The notion of layered separators is not new. They were used implicitly,
for example, in the famous proof, by Lipton and Tarjan \cite{LT-SJAM79}, that planar graphs have a separator of order
\Oh{\sqrt{n}}. Specifically, consider a breath-first-search tree $T$ of a
graph $G$ and the layering $L$ defined by partitioning the vertices of
$G$ according to their distance to the root of $T$. Each edge that is not in $T$ defines a
unique cycle, called a {\em $T$-cycle}. One step in their proof was
to show that any edge maximal planar graph has a $T$-cycle separator. A $T$-cycle contains at most
two vertices from each layer of $L$. However, to jump from the
existence of a $T$-cycle separator of $G$ to the existence of a layered
$2$-separator of $G$ requires more work. In particular, consider a connected
component $G'$ of $G$ that remains after removing a
$T$-cycle separator from $G$. In order to apply the result of Lipton
and Tarjan to $G'$, edges may need to be added to $G'$  in
such a way that it remains planar and such that $L$ is still its breath first
search layering. This is (at
least in the case of planar graphs) possible and the explicit
proof can be found in \cite{DFJW-2012}, where the layered separators in
this form have been introduced. The authors used layered separators to show that planar
graphs have non-repetitive chromatic number at most \Oh{\log n} \cite{DFJW-2012}, thus breaking a long
standing \Oh{\sqrt{n}} bound. 




\begin{lem}\lemlabel{sep}\cite{LT-SJAM79, DFJW-2012}
Let $L$ be a breath first search layering of a triangulated (that is,
edge maximal) planar graph $G$. Then $G$ has a layered
$(2,L)$-separator.
\end{lem}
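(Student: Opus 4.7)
The plan is to apply a weighted fundamental-cycle separator theorem directly to $G$, with a weight function chosen so that the resulting $T$-cycle balances $V(G')$. Concretely, given an arbitrary $G' \subseteq G$, I would place weight $1$ on every vertex of $V(G')$ and weight $0$ on every vertex of $V(G) \setminus V(G')$. Since $G$ is triangulated and planar and $T$ is a BFS tree of $G$, the weighted version of Lipton and Tarjan's fundamental-cycle lemma supplies a fundamental $T$-cycle $C$ such that each of the two open regions bounded by $C$ contains at most $\frac{2}{3}|V(G')|$ vertices of $V(G')$.

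Next, I would verify the per-layer bound on $V(C)$. A fundamental $T$-cycle consists of a single non-tree edge $uv$ together with the two tree paths from $u$ and $v$ up to their least common ancestor in $T$. In a BFS tree, a root-directed path from a vertex in $V_i$ to an ancestor in $V_k$ passes through exactly one vertex of each layer $V_k, V_{k+1}, \ldots, V_i$. As $uv$ is an edge of $G$, the layers of $u$ and $v$ differ by at most $1$, so the two tree paths together use at most two vertices of any layer $V_j$. Hence $|V(C) \cap V_j| \leq 2$ for every $j$.

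Finally, I would convert $C$ into the required separation of $G'$. Letting $A$ and $B$ denote the vertices of $G$ strictly inside and strictly outside $C$, I would set $V(G_1') := (A \cap V(G')) \cup (V(C) \cap V(G'))$ and $V(G_2') := (B \cap V(G')) \cup (V(C) \cap V(G'))$, take $G_1'$ and $G_2'$ to be the corresponding induced subgraphs of $G'$, and assign edges of $G'$ with both endpoints in $V(C) \cap V(G')$ arbitrarily to one side or the other so that $E(G') = E(G_1') \cup E(G_2')$. Since $C$ separates $A$ from $B$ in the plane, no edge of $G$ (and hence none of $G'$) runs between $A$ and $B$, so $(G_1', G_2')$ is a valid separation of $G'$. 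Its separator $V(G_1') \cap V(G_2') = V(C) \cap V(G')$ satisfies the $2$-per-layer bound by the previous paragraph, while each of $V(G_1') \setminus V(G_2')$ and $V(G_2') \setminus V(G_1')$ has size at most $\frac{2}{3}|V(G')|$ by the choice of $C$.

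The principal obstacle is verifying the weighted fundamental-cycle lemma, that is, that some fundamental cycle rather than a general planar cycle achieves the weighted $\frac{2}{3}$-balance. The unweighted version is the classical $T$-cycle step of Lipton and Tarjan; extending it to arbitrary non-negative vertex weights should follow by replacing ``number of vertices'' with ``total weight'' throughout their face-sweeping argument on fundamental cycles, with no additional combinatorial ideas required.
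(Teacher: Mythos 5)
Your proposal is correct, and it takes a genuinely different (and arguably cleaner) route than the one the paper indicates. The paper explicitly delegates the proof to \cite{DFJW-2012} and describes that route as follows: after removing a fundamental $T$-cycle separator from $G$, one considers a resulting component $G'$, but $G'$ is no longer triangulated and $L$ is no longer guaranteed to be a BFS layering of $G'$, so ``edges may need to be added to $G'$ in such a way that it remains planar and such that $L$ is still its breadth-first-search layering.'' Your proof sidesteps this re-triangulation entirely: by placing indicator weights on $V(G')$ and applying the \emph{weighted} cycle-separator lemma to $G$ itself (which stays triangulated with BFS tree $T$ throughout), you obtain a fundamental $T$-cycle of $G$ that is balanced with respect to $V(G')$, and intersecting its interior, exterior, and vertex set with $V(G')$ immediately yields the required separation of $G'$. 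The per-layer bound of $2$ and the $\tfrac{2}{3}$-balance both follow just as you argue. One small inaccuracy in your framing: the weighted fundamental-cycle statement is not an extension you need to re-derive --- Lipton and Tarjan's Lemma~2 in \cite{LT-SJAM79} is already stated and proved for arbitrary nonnegative vertex costs, and its separator for a triangulated graph with spanning tree $T$ is exactly a fundamental $T$-cycle (possibly together with the root-to-cycle tree path, which only tightens the per-layer count), so you may simply cite it rather than sketch a re-proof. Also, since you define $G_1'$ and $G_2'$ as the subgraphs of $G'$ \emph{induced} on $(A\cup V(C))\cap V(G')$ and $(B\cup V(C))\cap V(G')$, edges inside $V(C)\cap V(G')$ already lie in both, so the sentence about distributing them is unnecessary (though harmless). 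With those cosmetic fixes, this is a clean alternative proof of \lemref{sep}.
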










%

Our main result is expressed in terms of track layouts of graphs, a type of graph layouts that is closely related to queue layouts and 3D grid drawings. We define track layouts first. 
A \emph{vertex} $|I|$\emph{-colouring} of a graph $G$ is a partition
$\{V_i:i\in I\}$ of $V(G)$ such that for every edge $vw\in E(G)$, if $v\in V_i$
and $w\in V_j$ then $i\ne j$. The elements of the set $I$ are \emph{colours}, and  each
set $V_i$ is a \emph{colour class}. Suppose that  $<_i$ is a total order on
each colour class $V_i$.  Then each pair $(V_i,<_i)$ is a \emph{track},  and
$\{(V_i,<_i):i\in I\}$ is an $|I|$-\emph{track assignment} of $G$. The \emph{span} of an edge $vw$ in a track assignment $\{(V_i, <_i) :1\leq i \leq t\}$  is $|i-j|$ where $v\in V_i$ and $w\in V_j$. 

An \emph{X-crossing} in a track assignment consists of two edges $vw$ and $xy$
such that $v<_ix$ and $y<_jw$, for distinct colours $i$ and $j$.  A
$t$-track assignment of $G$ that has no X-crossings is called
$t$-\emph{track layout} of $G$. The minimum $t$ such that a graph $G$
has $t$-track layout is called \emph{track number} of $G$.


The main result of this paper is the following. 

\begin{thm}\thmlabel{main}
Every $n$-vertex graph $G$ that has a layered $\ell$-separator has track number at most 
$3\ell\,\lceil\log_{3/2}n\rceil+3\ell$.
\end{thm}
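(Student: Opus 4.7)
The plan is to induct on $n$, applying the layered $\ell$-separator assumption at each step. Invoking the separator on $G$ itself produces a separator $S$ of at most $\ell$ vertices per layer of the fixed layering $L$, and two subgraph sides $G[A]$, $G[B]$, where $A = V(G_1) \setminus V(G_2)$ and $B = V(G_2) \setminus V(G_1)$, each of size at most $2n/3$ and with no edges between them. By induction, each of $G[A]$, $G[B]$ admits a track layout using at most $T(2n/3) := 3\ell \lceil \log_{3/2}(2n/3)\rceil + 3\ell$ tracks.

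Next I would assemble a track layout of $G$ from these three pieces. Dedicate $3\ell$ fresh tracks to $S$, organised as three blocks of $\ell$ tracks indexed by the residues modulo~$3$. For each layer $V_i$ of $L$, place the at most $\ell$ vertices of $S \cap V_i$ on the $\ell$ tracks of block $i \bmod 3$, one per track, and order every such track by increasing layer number. For the $T(2n/3)$ recursive tracks, share them between $G[A]$ and $G[B]$ by placing every $A$-vertex (in the order given by the $G[A]$-layout) before every $B$-vertex (in the order given by the $G[B]$-layout) on each shared track. The arithmetic $3\ell + T(2n/3) = T(n)$ yields the claimed count, with the single-vertex base case needing $3\ell$ tracks trivially.

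The core verification is that no X-crossing arises in this merge. Edges internal to $G[A]$ or internal to $G[B]$ cause none, by the inductive layouts. An intra-$A$ edge and an intra-$B$ edge cannot X-cross either, because on every shared recursive track the $A$-portion entirely precedes the $B$-portion, so all four endpoints inherit the same left-to-right split. Within $G[S]$, the mod-$3$ blocking sends every inter-layer edge onto two tracks in distinct blocks and every intra-layer edge onto two tracks of the same block, and in each configuration the per-track ordering by layer number rules out an X-crossing directly.

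The main obstacle will be the remaining case: a cross-edge between $A\cup B$ and $S$ interacting with another cross-edge on the same pair of tracks. Here one must reconcile the $A$-before-$B$ rule on the shared $A/B$-track with the layer order on the shared $S$-track, using that an edge of $G$ forces its endpoints into layers differing by at most $1$ and that two $S$-vertices on a common $S$-track lie in layers congruent modulo~$3$. I expect this step to require either a strengthened inductive invariant (for instance, that each recursive track is ordered consistently with $L$) or a careful layer-parity case analysis; this is where the factor of $3$ in $3\ell$ really does its work.
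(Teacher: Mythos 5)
Your skeleton matches the paper's: recursively peel off layered $\ell$-separators, allocate $3\ell$ tracks to each level of the recursion using the residue $i \bmod 3$ of the layer together with the within-layer index $k\le\ell$, and close with the recurrence $T(n)=3\ell+T(2n/3)$. You also correctly identify the crux as the interaction between a separator-to-recursive-side edge and another such edge. But you leave that case as a to-do, and the specific merge rule you propose — on each shared recursive track, put \emph{all} of $A$ before \emph{all} of $B$ — is in fact wrong. Here is a concrete failure. Take $x,v\in S$ with $i_x=0$, $i_v=3$, and $k_x=k_v$, so $x$ and $v$ share an $S$-track with $x$ before $v$ (ordered by layer). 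Take $y\in B$ with $i_y=0$ adjacent to $x$, and $w\in A$ with $i_w=3$ adjacent to $v$; if the recursive track index records $(i\bmod 3,k)$ then $w$ and $y$ can land on the same shared track, and your merge forces $w$ (layer~3) before $y$ (layer~0). Now $x<v$ on the $S$-track while $w<y$ on the shared track, so $vw$ and $xy$ X-cross. The $\bmod\,3$ arithmetic alone does not rescue this; the offending pair has $i_w-i_y=3$, which the residue condition allows.

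The repair is precisely the paper's ordering: within each track the \emph{primary} sort key is the layer number $i$ (two same-track vertices satisfy $i\equiv i'\pmod 3$, so distinct layers differ by at least $3$), and only among vertices in the \emph{same} layer does one apply the component/tree-position tiebreak ($A$ before $B$, or more generally left-to-right among sibling subtrees). With that order, $y$ (layer~0) precedes $w$ (layer~3) and the crossing vanishes; the same-layer tiebreak is then shown harmless by the observation that two adjacent cross-edges into a common $S$-track and a common same-layer shared track would force $i_v=i_x$ and hence $v=x$. The paper realizes this ordering globally rather than inductively: it first builds a (wide) track layout $T$ indexed by $(\text{depth},\,\text{layer}\,i,\,k)$ in which every track is contained in a single layer and ordered by a crossing-free drawing of the recursion tree $S$, and then applies a wrapping lemma per depth to collapse $i$ to $i\bmod 3$; the wrapping lemma (\lemref{wrap}) delivers exactly the ``first by $i$, then by tree order'' invariant you would need. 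So your inductive reformulation can be made to work, but only after replacing the $A$-before-$B$ merge with a layer-primary merge and carrying the layer-monotonicity of tracks as a strengthened inductive hypothesis — at which point it becomes the paper's construction in recursive clothing.
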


All the other results mentioned earlier follow from this theorem and previously known
results. Before proving \thmref{main} we discuss these implications.

If a graph G has a $t$-track layout with maximum edge span $s$, then
the queue number of $G$ is at most $s$ and thus at most $t-1$
\cite{DMW-SJC05}. Furthermore, every $c$-colourable $t$-track
graph $G$ with $n$ vertices has a 3D grid drawing in \Oh{t^2 n} volume \cite{DMW-SJC05}
as well as in \Oh{c^7 t n} volume \cite{DujWoo-SubQuad-AMS}. Thus \thmref{main} implies.

\begin{cor}\corlabel{cor1}
Every $n$-vertex $c$-colourable graph $G$ that has a layered
$\ell$-separator has queue number at most $3\ell\,\lceil\log_{3/2}n\rceil+3\ell$ and a 3D grid drawing in {O($\log n$)}  volume as well as in \Oh{c^7\ell\, n\log n} volume. 
\end{cor}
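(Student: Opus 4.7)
The plan is to derive all three bounds in the corollary as a direct consequence of \thmref{main} combined with the two cited facts stated immediately above the corollary. No new combinatorial work is required; the argument is essentially an arithmetic substitution.

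First I would apply \thmref{main} to obtain a $t$-track layout of $G$ with
\[
t \;\le\; 3\ell\,\lceil\log_{3/2}n\rceil+3\ell.
\]
This single track layout is the common starting object for all three claims.

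For the queue number bound, I would invoke the cited fact from \cite{DMW-SJC05}: a $t$-track layout with maximum edge span $s$ yields a queue layout using at most $s$ queues, and trivially $s \le t-1$. Thus the queue number of $G$ is at most $t-1$, which is bounded by $3\ell\,\lceil\log_{3/2}n\rceil+3\ell$. For the two 3D grid drawing volume bounds, I would plug the value of $t$ into each of the two cited estimates. The $\Oh{t^{2} n}$ construction of \cite{DMW-SJC05} gives an $\Oh{\ell^{2} n\log^{2} n}$ volume drawing, and the $\Oh{c^{7} t n}$ construction of \cite{DujWoo-SubQuad-AMS} gives the stated $\Oh{c^{7}\ell\, n\log n}$ volume drawing. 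The $c$-colourability hypothesis is used precisely so that the second bound can be invoked with the given small $c$ rather than the trivial $c=t$.

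There is no real obstacle; the corollary simply packages \thmref{main} through the known black-box translations between track layouts, queue layouts, and 3D grid drawings. The only point worth checking is that the track layout produced by \thmref{main} is a legitimate input to both of the cited 3D drawing constructions, but this is automatic since both constructions accept an arbitrary track layout as input and use only the colour classes, their internal orders, and the absence of X-crossings — all of which are guaranteed by the output of \thmref{main}.
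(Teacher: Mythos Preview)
Your proposal is correct and is exactly the argument the paper intends: the corollary is stated immediately after the two cited facts, and the paper's entire ``proof'' is the phrase ``Thus \thmref{main} implies,'' i.e., the same black-box substitution you spell out. Your computation of $\Oh{\ell^{2}n\log^{2}n}$ from the $\Oh{t^{2}n}$ bound is the intended content of the first volume claim (the ``$O(\log n)$ volume'' in the corollary statement is evidently a typo in the paper).
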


Together with \lemref{sep} this finally implies all the claimed
results on planar graphs.

\begin{cor}\corlabel{planar}
Every $n$-vertex planar graph has track number and queue number at
most $6\lceil\log_{3/2}n\rceil+6$ and a 3D grid drawing in \Oh{n\log n} volume.
\end{cor}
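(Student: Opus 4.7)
The plan is to deduce this corollary directly from Corollary~\ref{cor:cor1} and Lemma~\ref{lem:sep} after a brief reduction to the triangulated case. First, embed the given planar graph $G$ in the plane and add edges inside each face to obtain an edge-maximal planar supergraph $G^+$ on the same vertex set $V(G)$; this is always possible without destroying planarity. By \lemref{sep}, $G^+$ admits a layered $(2,L)$-separator, where $L$ is the breadth-first-search layering of $G^+$ from some chosen root.

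Next, apply \thmref{main} to $G^+$ with $\ell=2$ to obtain a track layout of $G^+$ using at most $6\lceil\log_{3/2}n\rceil+6$ tracks. Restricting this layout to the edges of $G$ yields a track layout of $G$ with the same number of tracks, since deleting edges cannot create X-crossings. The queue-number bound is then immediate from the fact cited just before \corref{cor1}: a $t$-track layout has maximum edge span at most $t-1$ and therefore yields a queue layout with at most $t-1$ queues, which is bounded by the same quantity $6\lceil\log_{3/2}n\rceil+6$.

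For the volume bound, invoke the \Oh{c^7\ell\, n\log n} part of \corref{cor1}. Every planar graph is $4$-colourable (or $5$-colourable by an elementary argument, if one prefers to sidestep the Four Colour Theorem), so $c=\Oh{1}$; together with $\ell=2$ this gives a 3D grid drawing in volume \Oh{n\log n}.

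There is no real obstacle here: the corollary is essentially a packaging of \thmref{main} and \lemref{sep}, with the only routine step being the triangulation of $G$ (which is used only to invoke \lemref{sep}, and whose extra edges are then simply discarded from the final layout). The substantive content of the paper — and where I would expect the work to sit — is in the proof of \thmref{main} itself.
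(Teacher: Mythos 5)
Your proposal is correct and matches the paper's intended argument exactly: the paper's proof of this corollary is a one-line remark that it follows from \lemref{sep} together with \corref{cor1}, and you have simply filled in the routine details (triangulate to invoke \lemref{sep}, apply \thmref{main} with $\ell=2$, discard the added edges, use $4$-colourability for the volume bound). No gaps.
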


 

\section{Proof of \thmref{main}}\seclabel{proof}

Let $G$ be an $n$-vertex graph, let $L=\{V_0,V_1,\dots,V_p\}$ be a
  layering of $G$, and let $\ell\geq 1$, such that $G$ has
  \lll--separator.  Removing such a separator from $G$ splits $G$ into
  connected components each of which has at most $\frac{2}{3} |V(G)|$
  vertices and its own \lll--separator. Thus the process can continue until
each component is an \lll--separator of itself. This process naturally
defines a rooted tree $S$ and a mapping of $V(G)$ to the nodes of $S$, as
follows. The root of $S$ is a node  to which the vertices of an
\lll--separator of $G$ are mapped. The root has $c\geq 1$ children in
$S$, one for each connected component $G_j$, $j\in [1,c]$, obtained by
removing the \lll--separator from $G$. The vertices of an \lll--separator of $G_j$, $j\in [1,c]$, are mapped to a child of the
root. The process continues until each component is an \lll--separator
of itself, or more specifically until each component has at most
$\ell$-vertices in each layer of $L$. In that case, such a component
is an  \lll--separator of itself and its vertices are mapped to a leaf of $S$. This
defines a rooted tree $S$ and a partition of $V$ to the nodes of $S$. One important
observation, is that the height of $S$ is most $\lceil\log_{3/2} n\rceil+1$. 
For a node $s$ of $S$, let $s(G)$ denote the set of vertices of $G$
that are mapped to $s$ and let $G[s]$ denote the graph induced by
$s(G)$ in $G$. Note that for each node $s$ of $S$, $s(G)$ has at most $\ell$ vertices in any layer of $L$.



\thmref{main} states that $G$ has a track layout with \Oh{\ell \log n}
tracks. To prove this we will first create a track layout $T$ of $G$ with
possibly lots of tracks. We then modify that layout in order to reduce the
number of tracks to \Oh{\ell \log n}.

To ease the notation, for a track $(V_r,<_r)$, indexed by colour $r$,
in a track assignment $R$, we denote that track by $(r)$  when the
ordering on each colour class is implicit.  Also we sometimes write
$v<_R w$. This indicates that $v$ and $w$ are on a same track $r$ of
$R$ and that $v<_r w$.

Throughout this section, it is important to keep in mind that a
\emph{layer} is a subset of vertices of $G$ defined by the layering
$L$ and that a \emph{track} is an (ordered) subset of vertices of $G$
defined by a track assignment of $G$.

We first define a track assignment $T$ of $G$. Consult
\figref{track-ass} in the process. Each vertex $v$ of  $V(G)$
is assigned to a track whose colour is defined by three indices $(d,
i, k)$.  Let $s_v$ denote the node of the tree $S$ that $v$ is mapped to. The first
index is the depth of $s_v$ in $S$.
 The root is considered to have depth $1$. Thus the first index, $d$, ranges from $1$ to $\lceil \log_{3/2} n\rceil+1$. The
second index is the layer of $L$ that contains $v$. Thus the second
index, $i$, can be as big as $\Omega(n)$. Finally, $s_v(G)$ contains
at most $\ell$ vertices from layer $i$ in $L$. Label these, at most $\ell$, vertices arbitrarily from $1$ to $\ell$ and let the third
index $k$ of each of them be
determined by this label.  Consider the tracks  themselves to be
lexicographically ordered. 

To complete the track assignment we need to define the
ordering of vertices in the same track. To do that we first define a simple
track layout of the tree $S$. Consider a natural way to draw $S$ in
the plane without crossings such that all the nodes of $S$ that are at the same
distance from the root are drawn on the same horizontal line, as
illustrated in \figref{tree}. This
defines a track layout $T_S$ of $S$ where each horizontal line is a
track and the ordering of the nodes within each track is implied by
the crossing free drawing of $S$. 

\begin{figure}
  \begin{center} 
   \includegraphics[width=0.55\linewidth]{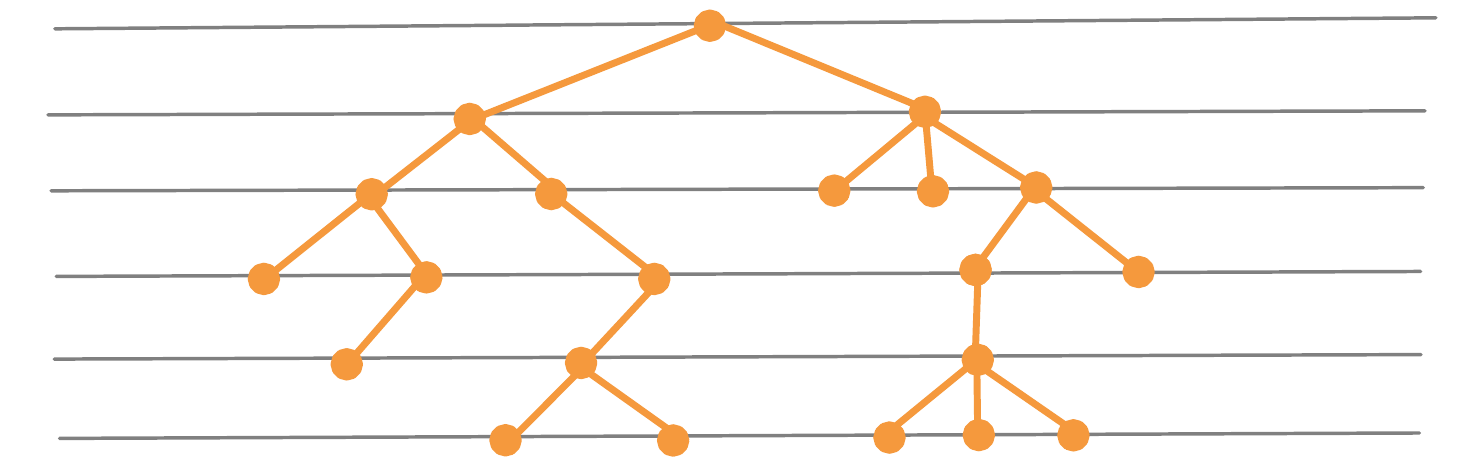}
  \end{center}
  \caption{A track layout of the tree $S$.}
  \figlabel{tree}
\end{figure}

To complete the track assignment $T$, we need to define the total
order of vertices that are in the same track of $T$. 
For any two vertices $v$ and $w$ of
$G$ that are assigned to the same track $(d, i, k)$ in $T$, let 
$v<_T w$ if the node $s_v$ that $v$ maps to in $S$ appears in $T_S$ to the left of the
node $s_w$ that $w$ maps to in $S$, that is, if $s_v<_{T_S}
s_w$. Since $v$ and $w$ are in the same track of $T$ only if they are mapped two distinct nodes of $S$ that are the the same distance from the root of $S$, this defines a total order of
each track in $T$. \figref{track-ass} depicts the resulting track assignment $T$ of $G$.

\begin{figure}
~\vspace{-2cm}
  \begin{center} 
   \includegraphics[width=0.9\linewidth]{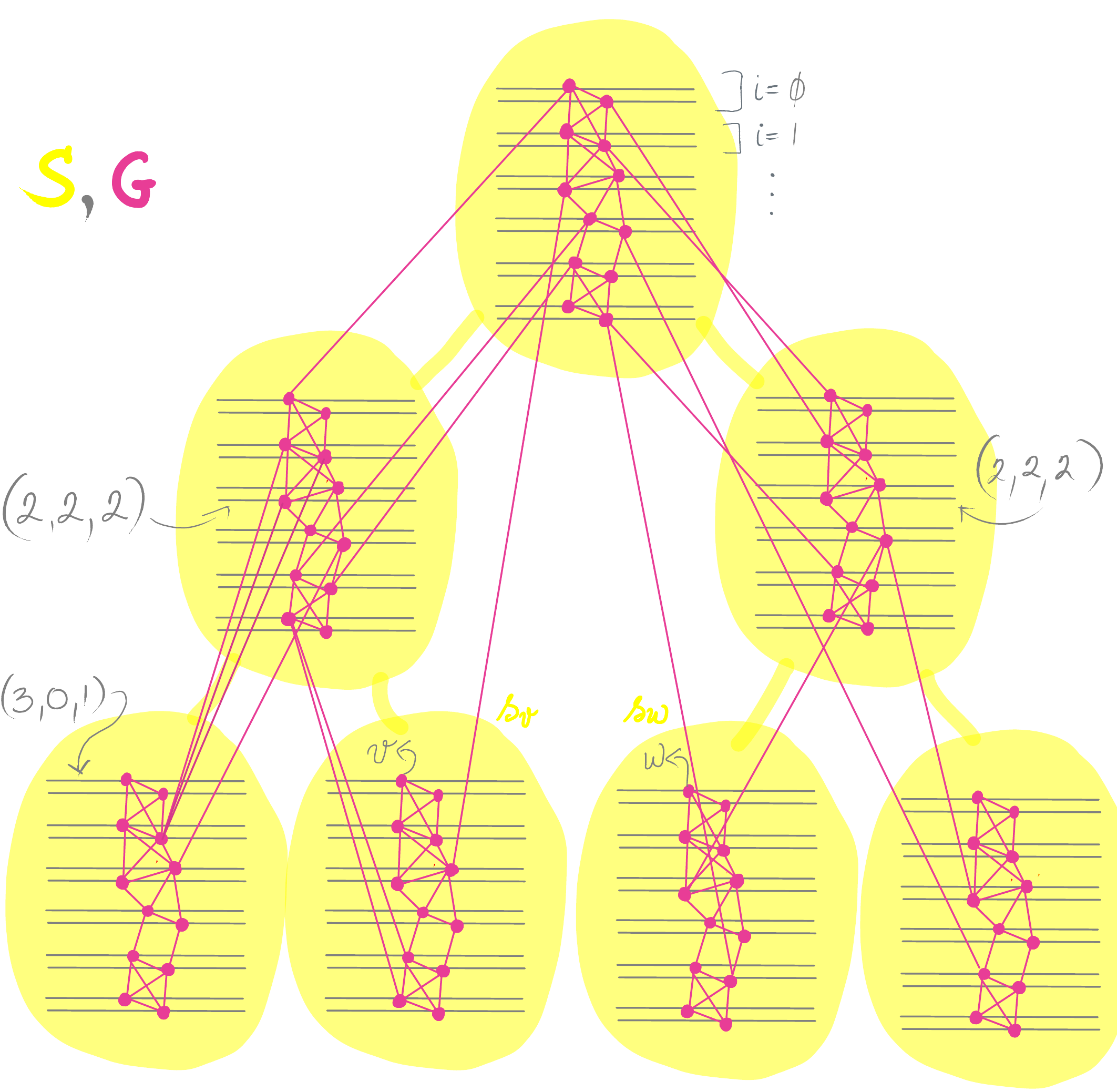}
  \end{center}
  \caption{A track layout $T$ of a graph $G$ which has a layered $(\ell=2$)--separator.}
  \figlabel{track-ass}
 \begin{center} 
   \includegraphics[width=0.8\linewidth]{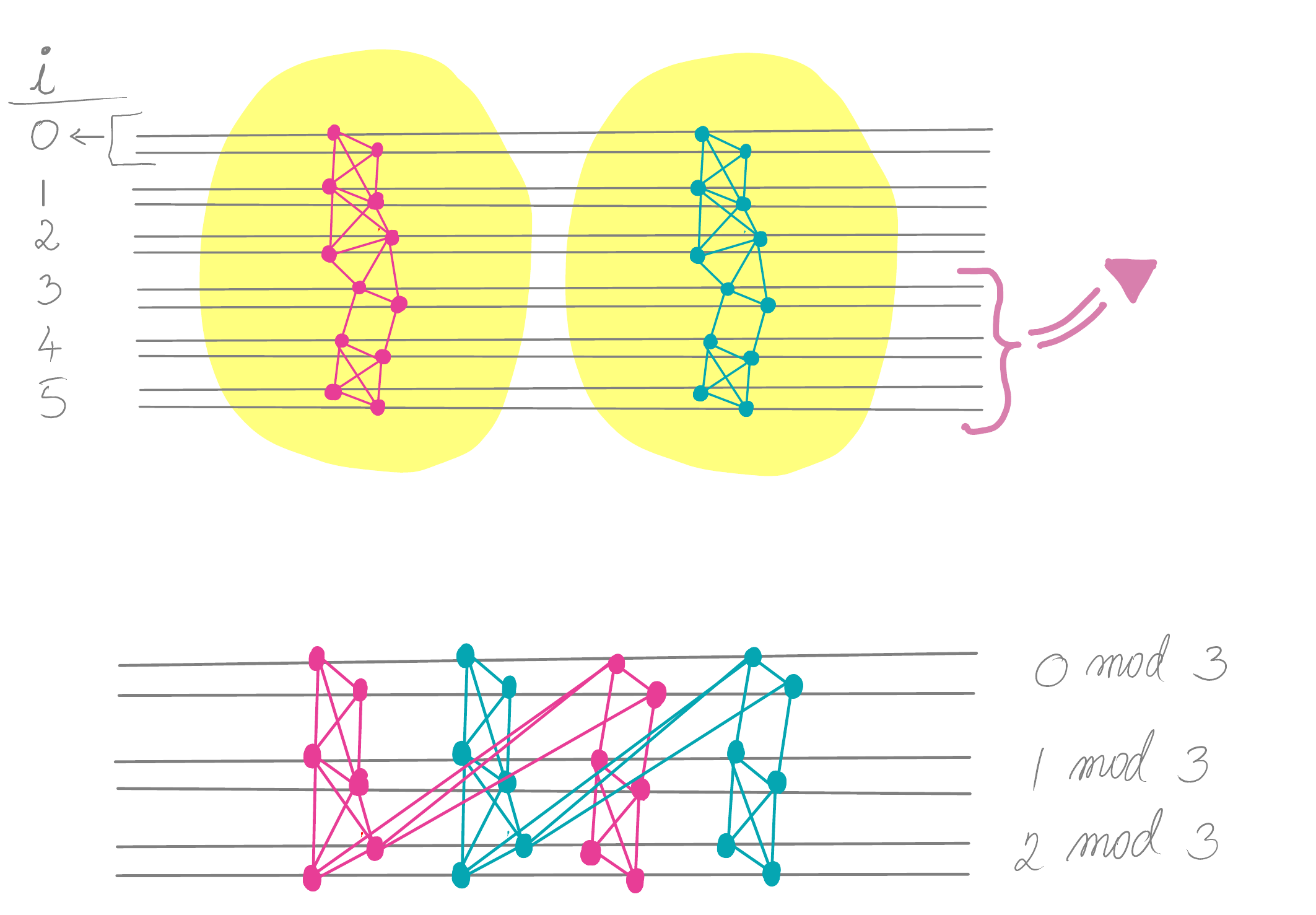}
  \end{center}
  \caption{Top figure: the track layout $T_2$ of $G_2$.
    Bottom figure: the track layout $T'_2$ obtained by wrapping $T_2$.}
 \figlabel{d-wrap}
\end{figure}

It is not difficult to verify that $T$ is indeed a track layout of
$G$, that is, $T$ does not have X-crossings. This track layout however
may have $\Omega(n)$ tracks. We now modify $T$ to reduce the number of
tracks to the claimed number. 

For a vertex $v$ of $G$, let  $(d_v, i_v, k_v)$ denote the track of $v$
in $T$.


Dujmovi\'c and Wood \cite{DPW-DMTCS04}, inspired by Felsner~\etal 
\cite{FLW-JGAA03}, proved a simple wrapping lemma that says that a track layout
with maximum span $s$ can be wrapped into a $(2s+1)$--track layout. 
For example, \figref{tree-wrap} in the appendix, depicts how a track layout of a tree with maximum
span $1$ (like the track layout $T_S$ of $S$) can be wrapped around a 
triangular prism to give a 3-track layout.

Unfortunately, the track layout $T$ of $G$ does not have a bounded span
-- its span can be $\Omega(n)$. (Since the tracks of $T$ are ordered by
lexicographical ordering, span is well defined in $T$.)  
However parts of the layout do have bounded span. In particular, consider the graph,
$G_d$ induced by the vertices of $G$ that are assigned to the tracks
of $T$ that have the same first index, $d$. 
For each $d$, the tracks of $T$ with that first
index equal to $d$, define a track layout $T_d$ of $G_d$, as
illustrated, for $d=2$ case, in the top part of
\figref{d-wrap}. Recall that $G_d$ is comprised of disjoint layered
$(l, L)$-separators
(see \figref{track-ass}). Since each edge in a layered $(l, L)$-separator either
connects two  vertices in the same layer of $L$  or two vertices from
two consecutive layers of $L$,  the span of an edge of $G_d$ in $T_d$ is at most 
 $2\ell-1$.

For each $d$, 
we now wrap the track layout $T_d$ into a  $3\ell$-track layout $T'_d$ of
$G_d$, as illustrated in the bottom of \figref{d-wrap}.  
The exact version of the wrapping lemma we use is given in the appendix, see
\lemref{wrap}. It mimics the wrapping of  Felsner~\etal\cite{FLW-JGAA03} and it is 
included for completeness. 
This defines a track assignment $T'$ of $G$. 

(One may be tempted to, instead of wrapping all of $G_d$, wrap
individually the
track layouts of the graphs induced by the vertices mapped to the same
node of $S$. It can be shown however that such strategy can introduce
X-crossings in $T'$. )

The wrapping lemma, \lemref{wrap}, implies that for all $d$, $T'_d$ has the following useful properties. Consider two
vertices $a$ and $b$ that are in the same track $f'=(d_a, i_a \mod 3,
k_a)=(d_b, i_b \mod 3, k_b)$ in $T'_d$.
Then if $a<_{f'} b$ in $T'_d$ and  
\begin{enumerate}
\item[(1)] $i_a\ne i_b$ then $i_b\geq i_a+3$, 
\item[(2)] otherwise, ($i_a=i_b$), $a$ and $b$ were in the same track $f$ in $T$
  and $a<_f b$. (This is because the wrapping does not change the
  ordering of vertices that were already in the same track in $T$).  
\end{enumerate} 

Since $d\leq \lceil \log_{3/2} n\rceil+1$, $i\mod 3\leq 3$ and $k\leq \ell$, 
the  track assignment $T'$ of $G$ has at most  
$3\ell\, \lceil\log_{3/2} n\rceil+1$ tracks, as claimed. It remains to prove that $T'$ is in fact a
track layout of $G$, that is, there are no X-crossings in the track
assignment $T'$

Assume by contradiction that there are two edges $vw$ and
$xy$ that form an X-crossing in $T'$. Let  $v$ and $x$ belong to a
same track in $T'$ and let $y$ and $w$ belong to a same track in $T'$. If
$d_v=d_w=d_x=d_y$, then $v, w, x$ and $y$ belong to the the same graph
$G_d$ and thus they do not form and X-crossing since $T'_d$ does not
have X-crossings by the wrapping lemma.

Thus $d_v=d_x=d_1$ and $d_w=d_y=d_2$ and $d_1\ne d_2$. Let without
loss of generality $d_1<d_2$ and $v<_{T'}x$ and
$y<_{T'} w$ in $T'$. Since $w$ and $y$ are in the same track,  $d_y=d_w$, $k_y=k_w$ and either $i_y=i_w$
or $i_w\geq i_y+3$ by properties (1) and (2). There are thus two cases to consider. First
consider the case that $i_w\geq i_y+3$. Since $w$ is adjacent to $v$,
$i_v=\{i_w-1, i_w, i_w+1\}$ and similarly $i_x=\{i_y-1, i_y,
i_y+1\}$. Thus $i_v\geq  i_w-1\geq i_y+2$ and $i_x\leq i_y+1$. Thus,
$i_v> i_x$ and property $(1)$ applies to $v$ and $x$. This contradicts
the assumption that $v<_{T'}x$, since property $(1)$, implies that
$i_x>i_v$.


Finally, consider the case that $i_y=i_w$. Then $y$ and $w$ are in the
same track in $T$ and their ordering, $y<_{T'}w$, in $T'$ is the same
as in $T$, $y<_{T}w$, by
property $(2)$. Since $v$ and $x$ are in the same track in $T'$ and
$v<_{T'} x$, either $i_v=i_x$ or
$i_x\geq i_v+3$, by properties (1) and (2). Thus again, since $w$ is
adjacent to $v$, $i_v=\{i_w-1, i_w, i_w+1\}$ and similarly
$i_x=\{i_y-1, i_y,i_y+1\}$. Since $i_y=i_w$, no pair of these indices differs by at
least 3 and thus $i_v=i_x$ by property $(1)$. That implies that $v$ and $x$ are in the same track in $T$
and thus by  property $(2)$ their
ordering in $T'$, $v<_{T'} x$, is the same as in $T$, $v<_T x$. This implies that
$vw$ and $xy$ form an X-crossing in $T$ thus providing the desired 
contradiction. This completes the proof of \thmref{main}. $\qed$






\section{Conclusions}\seclabel{conclusion}

Classical separators are one of the most powerful and widely used
tools in graph theory as they allow for solving all kinds of
combinatorial problems with the divide and conquer method. 
However, for most of the interesting graph classes such separators can
be polynomial in $n$, such as $\Omega(\sqrt{n})$ for planar graphs,
thus it is unclear how they can help in proving sub-polynomial
bounds. 

Layered separators provide extra structure that can aid in attacking
problems that inherently look for some vertex ordering, such as in
queue layouts and to lesser extent in non-repetitive colourings -- and
this despite of the fact that such separators can be of linear
order. The critical feature of layered separators is that edges only
appear between consecutive layers and that the number of vertices per
layer is bounded. As we have seen, planar graphs have such
separators. It is an interesting question to determine what other
classes of graphs have layered \Oh{1}-separators, especially given the
implications such separators have on the aforementioned problems. 
The graphs of bounded genus and more generally all  proper minor closed families are the natural
candidates.

Related to that, is the result of Gilbert, Hutchinson and Tarjan (see the proof of Theorem $4$
\cite{GHT-JAlg84}) who showed that every graph $G$ of genus $g$ has $2g$
$T$-cycles whose removal from $G$ leaves a planar graph $G'$. However,
(recall the discussion from \secref{results}) it is unclear if $G'$
can be made edge-maximal planar while keeping $L$ as its breath first
search layering.

Finally, the way the layered \Oh{1}-separators are used in this paper, 
and in \cite{DFJW-2012}, inherently leads to logarithmic upper
bounds. It seems difficult to envision how such separators could be
used to obtain sub-logarithmic upper bounds. Thus, if these \Oh{\log n} upper bounds are not
tight, the new methods may be needed to break them.






\bibliographystyle{myNatbibStyle}
\bibliography{myBibliography,myConferences}

\newpage
\appendix

\section{Wrapping Lemma}

For completeness, we add the version of the wrapping lemma used in
this paper. The
original lemma of Felsner~\etal \cite{FLW-GD01-ref, FLW-JGAA03} is the
below lemma with $\ell=1$. \figref{tree-wrap} below also depicts $\ell=1$ case. 

\begin{lem}\lemlabel{wrap}\cite{FLW-JGAA03, DPW-DMTCS04}
Let $T$ denote a track layout of a graph $G$ with tracks
in $T$ indexed by $(i,k)$ where $i\in\{0,\dots, p\}$ and $k\in
\{1,\dots, \ell\}$ and such that for each edge $vw$ of $G$, with $v$ in
track $(i_v, k_v)$ and $w$ in $(i_w, k_w)$, 
 $|i_v-i_w|\leq 1$. Then $T$ can be modified (wrapped) into a $3\ell$ track layout $T'$
of $G$ as follows:
Each vertex $v$ of $G$ is assigned to a
track $(i_v \mod 3, k_v)$ and two vertices $v$ and $x$ that are in the
same track of $T'$ are ordered as follows. Let $i_v\leq i_x$.
~\\ 
(1) If $i_v<i_x$, then $v <_{T'} x$. \\
(2) Otherwise, ($i_v=i_x$),  $v$ and $x$ are ordered in $T'$ as in $T$.
\end{lem}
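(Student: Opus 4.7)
The plan is to establish two claims: first, that $T'$ is a valid track assignment in the sense that no edge of $G$ has both endpoints on the same track; second, that $T'$ contains no X-crossing. Both will rest on a single structural observation about $T'$: two vertices $u,u'$ lie on the same track of $T'$ exactly when $k_u=k_{u'}$ and $i_u\equiv i_{u'}\pmod 3$, so either $i_u=i_{u'}$ (and then $u$ and $u'$ already shared a track in $T$) or $|i_u-i_{u'}|\geq 3$.

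For the first claim, suppose an edge $vw$ of $G$ has both endpoints on the same track of $T'$. The observation combined with the spread hypothesis $|i_v-i_w|\leq 1$ forces $i_v=i_w$, so $vw$ was already a within-track edge in $T$, contradicting the assumption that $T$ is a track layout.

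For the X-crossing claim I would argue by contradiction: suppose $vw$ and $xy$ form an X-crossing in $T'$, with $v,x$ on a common track (say $v<_{T'}x$) and $y,w$ on a common track (say $y<_{T'}w$). I would split into four sub-cases by whether $i_v<i_x$ or $i_v=i_x$, and independently whether $i_y<i_w$ or $i_y=i_w$. When both inequalities are strict, the structural observation yields $i_x\geq i_v+3$ and $i_w\geq i_y+3$, while the span bound on the edges $vw$ and $xy$ gives $i_w\leq i_v+1$ and $i_x\leq i_y+1$; adding these pairs contradicts $i_v+i_y+6\leq i_x+i_w\leq i_v+i_y+2$. The two mixed cases (one strict, one equal) are dispatched the same way, combining the single jump of at least $3$ with the two span bounds. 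In the all-equal case $i_v=i_x$ and $i_y=i_w$, the definition of $T'$ preserves the $T$-order on vertices of equal $i$-index, so the X-crossing was already present in $T$, again a contradiction.

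The main obstacle, and essentially the only one, is bookkeeping in the case split: isolating exactly the inequality each configuration produces (one strict gap of at least $3$ per pair that is strictly ordered, one inherited $T$-order per pair with equal index), and verifying that in every combination the edge-span hypothesis $|i_v-i_w|\leq 1$, $|i_x-i_y|\leq 1$ yields a numerical contradiction. Once the split is laid out correctly, each branch collapses to a one-line arithmetic check.
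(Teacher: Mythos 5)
Your proof is correct and rests on exactly the same mod-$3$-plus-edge-span argument as the paper's. The only cosmetic difference is that you split into four cases on $(i_v \text{ vs } i_x)\times(i_y\text{ vs }i_w)$, whereas the paper uses a two-way split on $i_v$ vs $i_x$ alone, observing that $i_v=i_x$ together with the span bounds already forces $i_w=i_y$; both organizations close cleanly.
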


\begin{proof}
It is simple to verify that each track in the track assignment $T'$ is
a total order. It remains to prove that $T'$ has no
X-crossings. Assume by contradiction that $vw$ and $xy$ form an
X-crossing in $T'$. Without loss of generality let $v<_{T'}x$ and
$y<_{T'} w$.  Since $i_v \mod 3=i_x \mod 3$, either $i_v=i_x$, or $i_x\geq
i_v+3$. 

If $i_v=i_x$, then $v<_{T} x$.  Since $v$ is adjacent to $w$,
$i_w=\{i_v-1, i_v, i_v+1\}$ and similarly $i_y=\{i_x-1,
i_x,i_x+1\}$. No pair of these indices differs by at least 3 when
$i_v=i_x$, thus $|i_w-i_y|\leq 2$. Since $w$ and $y$ are in the same track
in $T'$, $i_w \mod 3=i_y \mod 3$. Together with $|i_w-i_y|\leq 2$, this implies that  
$i_w=i_y$. That implies further that $y$ and $w$ are in the same track in $T$
and thus by  property $(2)$ their
ordering $y<_{T'}w$ in $T'$,  is the same as in $T$. Thus $y<_{T}w$, which implies that
$vw$ and $xy$ form an X-crossing in $T$ thus contradicting the
assumption. 

If $i_v<i_x$, then $i_x\geq i_v+3$. Again, since $v$ is adjacent to $w$, $i_w=\{i_v-1, i_v, i_v+1\}$
and similarly $i_y=\{i_x-1, i_x,i_x+1\}$. Thus $i_w\leq i_v+1$ and
$i_y\geq i_v+2$. Therefore, $i_w< i_y$ and by property $(1)$, $w<_{T'}
y$, contradicting the above assumption on the ordering of $y$ and $w$
in $T'$.
\end{proof}

\begin{figure}[h]
  \begin{center} 
   \includegraphics[width=0.9\linewidth]{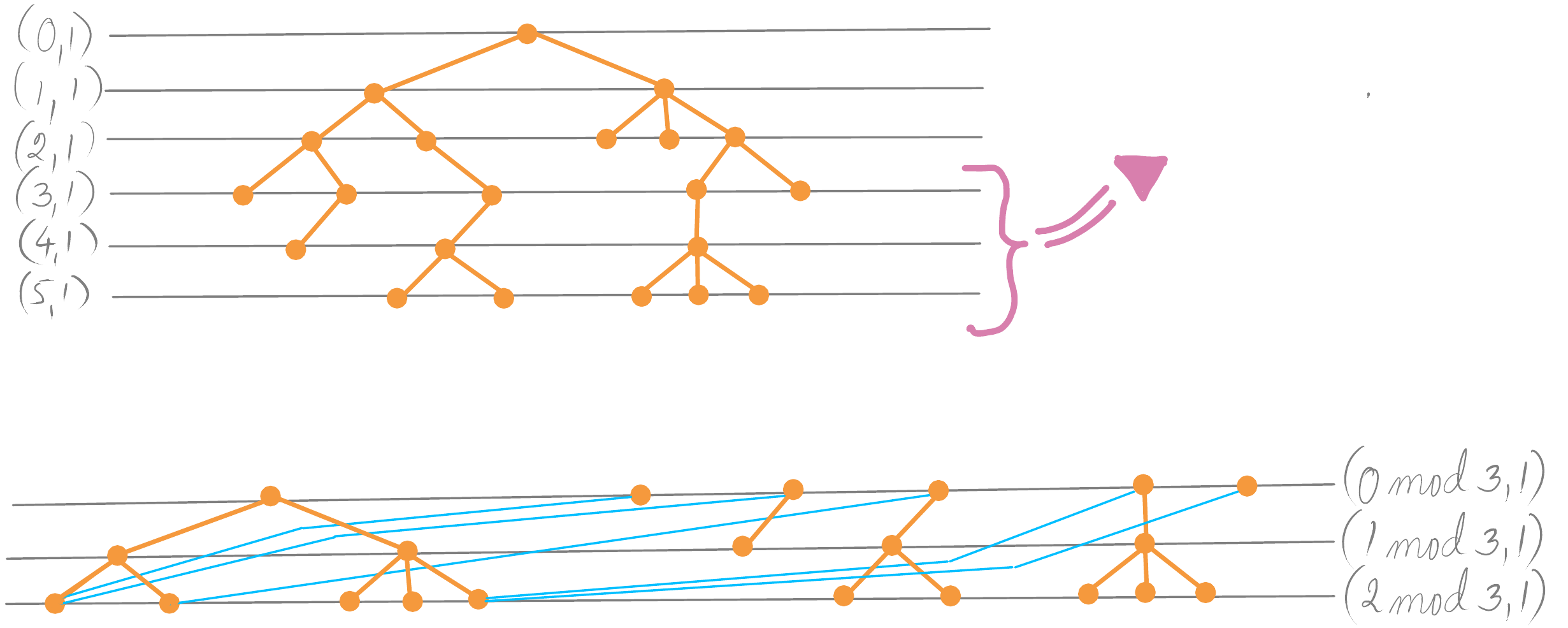}
  \end{center}
  \caption{Top figure: a $6$-track span-$1$ layout of a
    tree. Bottom figure: after wrapping to a $3$-track layout.}
  \figlabel{tree-wrap}
\end{figure}

\end{document}